\newtheorem{theorem}{Theorem}  
\newtheorem{proposition}{Proposition} 
\newtheorem{corollary}{Corollary} 
\theoremstyle{definition}
\newtheorem{remark}{Remark}
\title{Breaking Symmetry-Induced Degeneracy in Multi-Agent Ergodic Coverage via Stochastic Spectral Control}
\author{Kooktae Lee,~\IEEEmembership{Member,~IEEE}, and Julian Martinez
\thanks{K. Lee and J. Martinez are with the Department of Mechanical Engineering, 
New Mexico Institute of Mining and Technology, Socorro, NM 87801, USA 
(e-mail: sungjun.seo@student.nmt.edu; kooktae.lee@nmt.edu).}

\thanks{This work was supported by the U.S. National Science Foundation 
(NSF) CAREER Award under Grant CMMI-DCSD-2145810.}
}
\begin{document}

\maketitle

\begin{abstract}
Multi-agent ergodic coverage via Spectral Multiscale Coverage (SMC) provides a principled framework for driving a team of agents so that their collective time-averaged trajectories match a prescribed spatial distribution. While classical SMC has demonstrated empirical success, it can suffer from \emph{gradient cancellation}, particularly when agents are initialized near symmetry points of the target distribution, leading to undesirable behaviors such as stalling or motion constrained along symmetry axes. In this work, we rigorously characterize the initial conditions and symmetry-induced invariant manifolds that give rise to such directional degeneracy in first-order agent dynamics. To address this, we introduce a stochastic perturbation combined with a contraction term and prove that the resulting dynamics ensure almost-sure escape from zero-gradient manifolds while maintaining mean-square boundedness of agent trajectories. Simulations on symmetric multi-modal reference distributions demonstrate that the proposed stochastic SMC effectively mitigates transient stalling and axis-constrained motion, while ensuring that all agent trajectories remain bounded within the domain.
\end{abstract}

\begin{IEEEkeywords}
Multi-Agent Ergodic Coverage, Spectral Multiscale Coverage, Gradient Cancellation, Stochastic Perturbation, Mean-Square Boundedness
\end{IEEEkeywords}

\section{Introduction}
Multi-agent ergodic coverage is key for applications such as surveillance, exploration, and environmental monitoring. Spectral Multiscale Coverage (SMC) \cite{mathew2009spectral,mathew2011metrics} drives a multi-agent system (MAS) so that collective time‑averaged trajectories match a prescribed spatial distribution. By encoding coverage objectives in the spectral domain, SMC provides a systematic mechanism for multi-scale distribution matching. 
Extensions of SMC include receding-horizon ergodic exploration for real-time coverage and target localization \cite{8114522} and decentralized ergodic control for distribution-driven multi-agent sensing \cite{abraham2018decentralized}. Further developments encompass heterogeneous multi-agent search \cite{sartoretti2021spectral}, coverage with low-information sensors \cite{coffin2022multi}, distributed coverage in unknown environments \cite{mantovani2025online}, and high-dimensional ergodic exploration \cite{shetty2021ergodic}, reflecting the active and ongoing application of spectral coverage methods in robotics and control.

While these methods have demonstrated empirical performance, rigorous asymptotic convergence guarantees for arbitrary initial conditions are still lacking.
Classical SMC can exhibit \emph{gradient cancellation}, leading to stalling or motion along symmetry axes, particularly when agents are initialized near symmetry points \cite{lee2025smcs}. Although the original SMC \cite{mathew2011metrics} derived the optimal control law for both first- and second-order dynamics, we here focus on first-order integrators:
\begin{equation}\label{eq:agent_dynamics}
\dot{\mathsf{x}}_i = u_i, \quad \mathsf{x}_i = [x_i, y_i]^{\top} \in \mathbb{R}^2, \quad i = 1,\dots,N,
\end{equation}
where $N$ denotes the total number of agents. 

Since the classical SMC frameworks \cite{mathew2009spectral}, \cite{mathew2011metrics} were originally developed for two-dimensional domains, we focus on the 2D case in this paper for clarity of exposition and analysis.
Nevertheless, the proposed stochastic spectral control formulation is not inherently restricted to two dimensions and can be extended to general $d$-dimensional domains in a straightforward manner without loss of generality.
Further, we assume that the domain $\Omega \subset \mathbb{R}^2$ is bounded, and that the reference distribution $\rho(\mathsf{x}) \ge 0$ is smooth with $\int_\Omega \rho(\mathsf{x})\,d\mathsf{x} = 1$.

SMC represents $\rho(\mathsf{x})$ using a Fourier basis $\{f_k\}_{k\in\mathbb{N}}$ in $L^2(\Omega)$ with Neumann boundary conditions. For a rectangular domain $\Omega = [0,L_x]\times[0,L_y]$, each basis function is
\begin{align}
f_k(x,y) &\equiv f_{mn}(x,y)\nonumber\\
&=
\cos\!\left(\frac{m\pi x}{L_x}\right)
\cos\!\left(\frac{n\pi y}{L_y}\right),
\, m,n\in\mathbb{N}_0,
\label{eq:f_k}
\end{align}
with Fourier coefficients constructed from the reference distribution $\rho(\mathsf{x})$
\begin{equation}
\mu_k = \int_\Omega f_k(\mathsf{x})\rho(\mathsf{x})\,d\mathsf{x}
\end{equation}
and with empirical time-averaged Fourier coefficients formed by agent trajectories
\begin{equation}
c_k(t)
=
\frac{1}{Nt}
\sum_{j=1}^N
\int_{0}^{t}
f_k(\mathsf{x}_j(\tau))\,d\tau.
\label{eq:c_k}
\end{equation}

Then, classical SMC drives agents to match $\{\mu_k\}$ via
\begin{align}
u_i(t)
&=
- u_{\max}\frac{B_i(t)}{\|B_i(t)\|},\label{eq:u_i}\\
B_i(t)
&\triangleq
\sum_{k} \lambda_k \nabla f_k(\mathsf{x}_i)
\big(c_k(t) - \mu_k\big)
\in \mathbb{R}^2,  
\end{align}
where $\lambda_k > 0$ and $u_{\max}$ denotes the input bound.

The key observation is that the control vector $B_i(t)$ may vanish for certain initial configurations due to cancellation among spectral gradients. 
In particular, specific initial agent placements can induce transient gradient cancellation, resulting in vanishing, near-vanishing control inputs, or  oscillatory motion, even in the absence of equilibria in low-density regions.

\section{Gradient Cancellation in Classical SMC}

\subsection{Example: Quadrimodal Symmetric Gaussian Reference}
\label{subsec:quadrimodal_example}

As illustrated in Fig. \ref{fig:axis_diagonal_examples}, consider a symmetric quadrimodal reference density on the rectangular domain
$\Omega=[0,L_x]\times[0,L_y]$,
defined as
\begin{equation}
\rho(x,y)
= \frac{1}{4}\sum_{k=1}^{4}\mathcal{N}(\mu_k,\Sigma),
\end{equation}
with Gaussian centers
\begin{equation*}
\begin{aligned}
\mu_1 &= [500,500]^{\top}, & \mu_2 &= [500,1500]^{\top},\\
\mu_3 &= [1500,500]^{\top}, & \mu_4 &= [1500,1500]^{\top}
\end{aligned}
\end{equation*}
and isotropic covariance
$
\Sigma=\sigma_{\rho}^2 I_2.
$

This density is symmetric with respect to the midlines
$x=L_x/2$ and $y=L_y/2$, and invariant under reflection across both coordinate axes.

\begin{figure}[t!]
    \centering
    \subfloat[]{
    \includegraphics[width=0.5\linewidth]{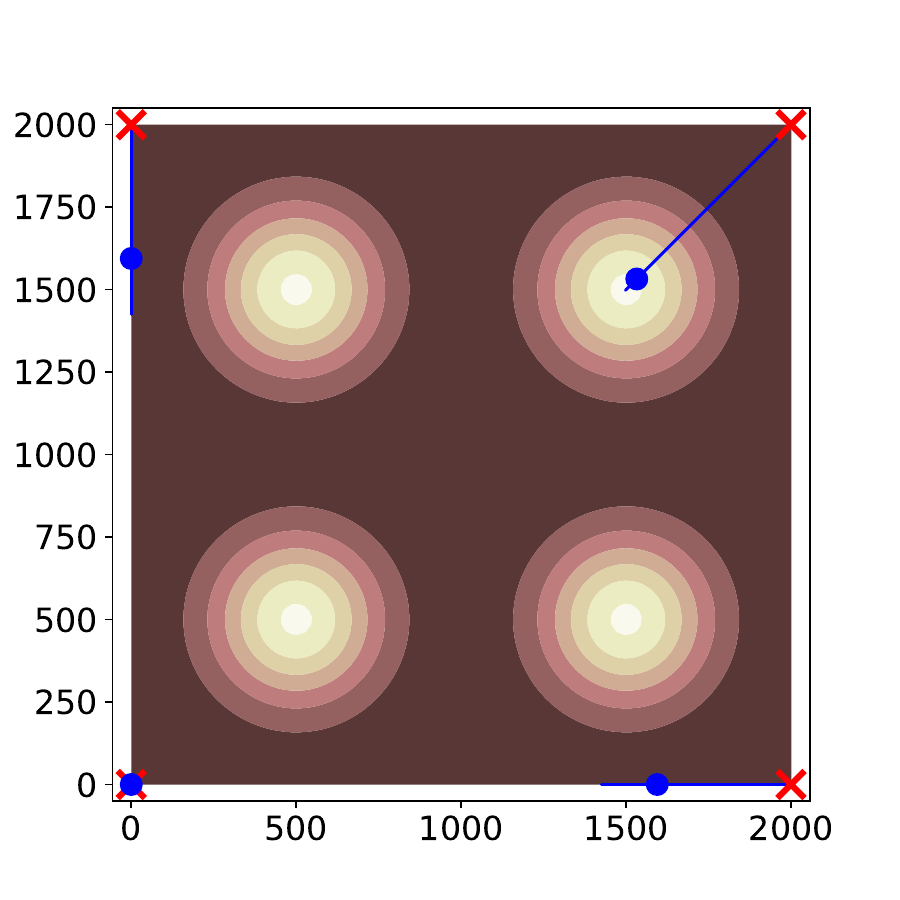}}
    \subfloat[]{
\includegraphics[width=0.5\linewidth]{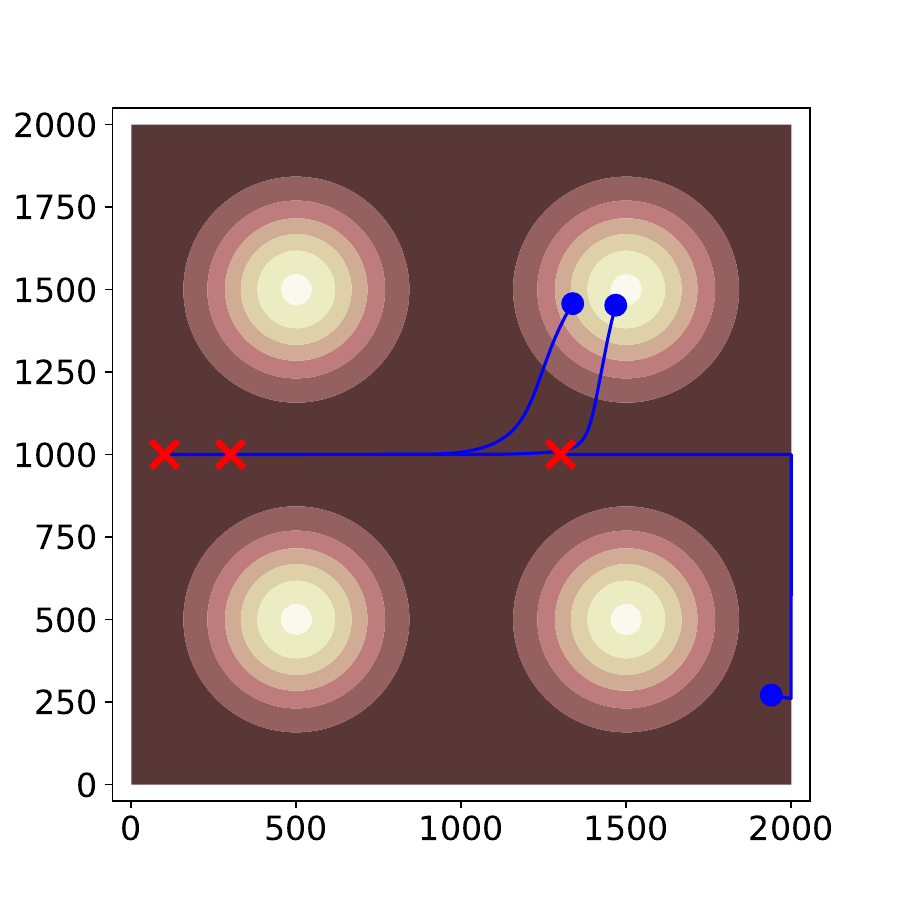}}
\caption{SMC directional degeneracy example. Crosses: initial positions; circles: final positions. 
(a) Four agents from corners: axis- or diagonal-constrained motion, with the origin agent stalled.  
(b) Three agents from non-symmetric positions: prolonged axis-aligned motion, with the left agents eventually increasing $y$ toward the upper-right Gaussian, and the right-most agent moving downward at the boundary before gradually decreasing $x$.}\label{fig:axis_diagonal_examples}
\end{figure}

\medskip
\begin{remark}[Axis-Constrained Motion and Stalling in SMC]
\label{rem:axis_constrained_motion}

Even though the spectral gradient $B_i(t)$ is nonzero at most points in $\Omega$, SMC dynamics can constrain agent motion along specific directions depending on initial positions and domain boundaries.

\begin{itemize}
    \item \textbf{Fig.~1(a) – Corner and diagonal initialization.}  
    Agents at domain corners or axes move along symmetry-constrained directions: $[0,L_y]^{\top}$ or $[L_x,0]^{\top}$ along axes, $[L_x,L_y]^{\top}$ along the diagonal. The origin $\mathsf{x} = [0,0]^{\top}$ remains stationary with $B_i(0)=0$. Even as the simulation time is extended arbitrarily long, agents initialized along symmetry axes or diagonals exhibit persistent stagnation due to gradient cancellation. 
    Depending on the initial configuration, the motion either stalls immediately or transitions, after a brief transient, into a near-stationary behavior characterized by only minute oscillations.

    \item \textbf{Fig.~1(b) – Non-symmetric initialization.}  
    Three agents at
    $\mathsf{x}_1 = [100,1000]^{\top}$, $\mathsf{x}_2 = [300,1000]^{\top}$, and $\mathsf{x}_3 = [1300,1000]^{\top}$    
    show constrained motion despite irregular placement. Left-most two agents initially move along $x$ and gradually gain $y$ motion toward a Gaussian peak. The right-most agent moves horizontally to $x=L_x=2000$, then vertically and finally horizontally in the negative $x$-direction. Early constrained motion is evident up to $T=150$\,s while full SMC convergence occurs much later.
\end{itemize}
\end{remark}

These examples show that axis- or diagonal-constrained, and stalled motions occur across a wide range of initial positions and boundaries. Importantly, this behavior can arise even when individual agent initial positions are not symmetric, indicating that it is an intrinsic property of the spectral gradient dynamics.

\subsection{Symmetry-Induced Invariant Manifolds in Spectral Coverage Control}

\begin{proposition}[Symmetry-Induced Invariant Sets in 2D SMC]
\label{prop:invariant_manifold_smc}
Consider the agent dynamics \eqref{eq:agent_dynamics} with control law \eqref{eq:u_i} on a rectangular domain $\Omega = [0,L_x]\times[0,L_y]$ using the Fourier basis \eqref{eq:f_k}, with a target density $\rho(\mathsf{x})$ that is symmetric about the vertical and horizontal midlines.  

Let an agent be initialized at a location $\mathsf{x}_i(0)$ that respects the same symmetry as $\rho$. Then, the following sets are invariant under the dynamics:
\[
\begin{aligned}
&\text{axes: } \mathcal{A}_0^x = \{\mathsf{x} \mid x=0\}, \quad \mathcal{A}_0^y = \{\mathsf{x} \mid y=0\}, \\
&\text{midlines: } \mathcal{A}_m^x = \{\mathsf{x} \mid x=L_x/2\}, \quad \mathcal{A}_m^y = \{\mathsf{x} \mid y=L_y/2\}, \\
&\text{diagonals: } \mathcal{D} = \{\mathsf{x} \mid y=x\}, \quad \mathcal{D}' = \{\mathsf{x} \mid y=L_x-x\}, \\
&\text{origin: } \mathsf{x} = [0,0]^\top.
\end{aligned}
\]

Specifically, for an agent starting on any of these sets with positions symmetric with respect to $\rho$, the velocity component normal to the set vanishes, ensuring that the agent remains confined to the set.
\end{proposition}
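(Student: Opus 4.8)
The plan is to reduce the invariance of each set to the single statement that the component of the spectral force $B_i(t)$ normal to the set vanishes whenever agent $i$ lies on it, since then $\dot{\mathsf{x}}_i = -u_{\max}B_i/\|B_i\|$ carries no normal component (and where $B_i=0$ the agent does not move at all, once one adopts the convention $u_i=0$ there). Every set in the list is the fixed-point set $\mathrm{Fix}(R)$ of a reflection $R$ that is an isometry of $\Omega$ (for the diagonals one takes $L_x=L_y$). The engine of the proof is the transformation law of the cosine basis under these reflections. Writing $k=(m,n)$, the addition formula shows that reflection across a coordinate axis or midline acts diagonally, $f_k\circ R = \varepsilon_k f_k$ with $\varepsilon_k\in\{+1,-1\}$, whereas reflection across a diagonal permutes modes, $f_{mn}\circ R_{\mathcal D}=f_{nm}$ and $f_{mn}\circ R_{\mathcal D'}=(-1)^{m+n}f_{nm}$.

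For the axes and midlines I would argue as follows. Because $\rho\circ R=\rho$, the identity $\mu_k=\langle f_k,\rho\rangle=\langle f_k\circ R,\rho\circ R\rangle=\varepsilon_k\mu_k$ forces $\mu_k=0$ for every ``odd'' mode ($\varepsilon_k=-1$), i.e. for odd $m$ (resp. odd $n$). If the whole configuration is $R$-symmetric at all times up to $t$, applying the same reflection inside the definition of $c_k(t)$ in \eqref{eq:c_k} gives $c_k(t)=\varepsilon_k c_k(t)$, hence $c_k(t)=0$ on the odd modes as well, so these modes drop out of $B_i$ entirely. The surviving ``even'' modes contribute nothing to the normal component: differentiating $f_k\circ R=f_k$ and evaluating on $\mathrm{Fix}(R)$ shows $\nabla f_k$ lies in the $+1$ eigenspace of $R$, hence is tangent to $\mathrm{Fix}(R)$ — concretely, $\partial_x f_{mn}(0,y)=0$ and $\partial_x f_{mn}(L_x/2,y)=-\tfrac{m\pi}{L_x}\sin(\tfrac{m\pi}{2})\cos(\tfrac{n\pi y}{L_y})=0$ for even $m$. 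The origin is the intersection $\mathcal A_0^x\cap\mathcal A_0^y$, where both components vanish, so $B_i=0$ and the agent is a fixed point, matching $B_i(0)=0$ in Remark~\ref{rem:axis_constrained_motion}.

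For the diagonals I would pair the modes $(m,n)$ and $(n,m)$. Assuming the standard symmetric weights $\lambda_{mn}=\lambda_{nm}$ and the additional diagonal symmetry of $\rho$ present in the motivating example, the transformation law gives $\mu_{mn}=\mu_{nm}$ (resp. $\mu_{mn}=(-1)^{m+n}\mu_{nm}$) and, for a diagonally symmetric configuration, the same relation for $c_{mn}$. On $\mathcal D$ one then checks that swapping $m\leftrightarrow n$ in the expression for $B_{i,y}$ converts it into $B_{i,x}$, so $B_{i,x}=B_{i,y}$ and the component of $B_i$ along the normal $[1,-1]^\top$ vanishes. On $\mathcal D'$ the gradient brackets are antisymmetric under $m\leftrightarrow n$ while the coefficient products carry the reflection sign $(-1)^{m+n}$, so after accounting for these signs the paired contributions to $B_{i,x}+B_{i,y}$ cancel and the normal component along $[1,1]^\top$ vanishes; the diagonal terms $m=n$ vanish individually.

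The step I expect to be the genuine obstacle is not any single mode computation but the self-consistency of the argument: the vanishing of the odd-mode coefficients $c_k(t)$ was assumed to hold for all $\tau\le t$, yet that persistent configuration symmetry is exactly what we are trying to establish. I would close this loop by proving that the coupled $N$-agent flow is itself $R$-equivariant — reflecting every agent and relabeling by the permutation $R$ induces on the team maps a solution to a solution, because $B_i$ transforms as $B_i\mapsto RB_i$ under the reflected configuration and the isometry preserves $\|B_i\|$ in the normalization. Since $c_k(t)$ is a causal time-average of the trajectories, equivariance holds for the full history, and a uniqueness argument for the ODE (valid wherever $B_i\neq0$) forces the reflected solution to coincide with the original whenever the initial configuration is symmetric. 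Symmetry is thereby propagated in time, which both justifies $c_k(t)=0$ on the odd modes and confines any agent started on $\mathrm{Fix}(R)$ to that set. The remaining care is the non-smoothness of the normalized control at $B_i=0$, which I would handle by treating such points (the origin, and degenerate cancellations) separately as equilibria rather than through the Lipschitz uniqueness theorem.
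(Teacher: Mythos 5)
Your proposal is correct and, at the level of individual Fourier modes, performs the same computations as the paper: on the axes every mode's normal derivative vanishes pointwise; on the midline $x=L_x/2$ the even-$m$ modes have vanishing normal gradient while the odd-$m$ modes are eliminated through the reflection symmetry; on the diagonals the $(m,n)\leftrightarrow(n,m)$ pairing yields $B_{i,x}=B_{i,y}$ (resp.\ cancellation of $B_{i,x}+B_{i,y}$, with the $m=n$ terms vanishing individually). Where you genuinely depart from the paper is in what you call the self-consistency step, and this is the substantive difference. The paper's proof invokes only the symmetry of the reference density $\rho$ to conclude that ``the corresponding spectral coefficients vanish,'' i.e.\ it disposes of $\mu_k$; but the term entering $B_i$ is $\lambda_k\nabla f_k(\mathsf{x}_i)\bigl(c_k(t)-\mu_k\bigr)$, and the empirical coefficients $c_k(t)$ in \eqref{eq:c_k} are time averages over the \emph{entire multi-agent history}, so they vanish on the odd modes (or satisfy $c_{mn}=c_{nm}$ on the diagonal) only if the configuration has remained symmetric for all past times --- which is precisely the invariance being proved. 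You identify this circularity explicitly and close it with an equivariance-plus-uniqueness argument: the coupled flow commutes with the reflection and the induced relabeling of agents, so a symmetric initial configuration generates a symmetric solution, which in turn justifies the vanishing of the odd-mode $c_k(t)$ and confines agents to the fixed-point set. The paper has no counterpart to this step; it silently lets $c_k$ inherit the symmetry of $\rho$. Your version is therefore strictly more complete, and it also surfaces hypotheses the paper leaves tacit: the diagonal cases require $L_x=L_y$, diagonal symmetry of $\rho$ beyond the stated midline symmetries, and symmetric weights $\lambda_{mn}=\lambda_{nm}$; and the non-Lipschitz point $B_i=0$ of the normalized control must be handled by convention (as an equilibrium) for uniqueness to apply. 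The one refinement you should make is to state the uniqueness step for the augmented Markovian system in which the running integrals defining $c_k(t)$ are adjoined as states, since the raw agent dynamics is history-dependent; with that phrasing your argument goes through in full.
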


\begin{proof}
The gradient of the Fourier basis is
\[
\nabla f_{mn}(\mathsf{x}) = 
\begin{bmatrix} 
-\frac{m\pi}{L_x} \sin\frac{m\pi x}{L_x} \cos\frac{n\pi y}{L_y} \\[1ex] 
-\frac{n\pi}{L_y} \cos\frac{m\pi x}{L_x} \sin\frac{n\pi y}{L_y} 
\end{bmatrix}.
\]

\textbf{1. Axes and Origin:}  

- On $\mathcal{A}_0^x$ ($x=0$), $\partial_x f_{mn}([0,y]^\top) = 0$ for all $m,n$, hence $B_{i,x}=0$.  
- On $\mathcal{A}_0^y$ ($y=0$), $\partial_y f_{mn}([x,0]^\top) = 0$, hence $B_{i,y}=0$.  
- At the origin $\mathsf{x}=[0,0]^\top$, $\nabla f_{mn}(\mathbf{0})=0$ for all $m,n$, so $B_i=0$.

\textbf{2. Midlines and Diagonals:}  

- \emph{Midline $x=L_x/2$:}  
The Fourier basis satisfies
$\partial_x f_{mn}(\mathsf{x}) \propto \sin(m\pi x/L_x)$.  
For even $m$, $\sin(m\pi/2)=0$, hence the $x$-component of each mode vanishes identically along the midline.  
For odd $m$, the reference density is even with respect to the reflection $x \mapsto L_x-x$, which implies that the corresponding spectral coefficients vanish.  
Consequently, all contributions to the normal component cancel and $B_{i,x}=0$ holds for all $y$, yielding no velocity component transverse to the midline.

- \emph{Diagonal $y=x$:}  
Along the diagonal, the reference distribution is invariant under the coordinate exchange $(x,y)\mapsto(y,x)$, implying equality of the spectral-gradient components,
$B_{i,x}=B_{i,y}$.  
Therefore, the velocity component normal to the diagonal vanishes, and agent trajectories initialized on the diagonal remain confined to it.

Consequently, all sets stated above are invariant under the stated symmetry conditions.
\end{proof}

\begin{remark}[Interpretation of Symmetry-Induced Behavior]
Proposition~\ref{prop:invariant_manifold_smc} implies that if an agent is initialized exactly on a symmetry axis, midline, or diagonal, the normal component of the SMC vector field vanishes, and motion remains confined to that set. This does not imply attraction from generic initial conditions.  

Note that in practice, even extremely small normal components, negligible mathematically, can cause the agent to leave the invariant set because $B_i$ in \eqref{eq:u_i} is normalized. Once the agent moves slightly off the set, normalization amplifies the deviation, producing motion that appears to fully escape the invariant set, as shown in Fig.~\ref{fig:axis_diagonal_examples}(b), even though the invariant set holds rigorously in theory.
\end{remark}

\section{Stochastic Spectral Multiscale Control}

Classical SMC can exhibit \emph{directional degeneracy} when agents are initialized on symmetry manifolds of the reference distribution, where the spectral gradient vanishes along certain directions. This may cause transient or persistent behaviors such as stalling or motion constrained along symmetry axes. To address this, we introduce stochastic perturbations into the SMC dynamics. Modeling agent motion as an Itô process shows that additive noise breaks zero-gradient invariance while preserving spectral boundedness in expectation.

\subsection{Perturbed SMC Dynamics}
The SMC input \eqref{eq:u_i} is not Lipschitz continuous
at $B_i(t)=0$ due to the normalization operation.
While this does not pose difficulties in deterministic coverage
algorithms, it leads to analytical issues in the stochastic setting
when stochastic Lyapunov analysis and Itô calculus are employed to
establish boundedness properties.

To obtain a well-defined and Lipschitz continuous drift term, we
introduce the following smooth regularization:
\begin{equation}\label{eq:u_i_smooth}
u_i(t)
=
- u_{\max}
\frac{B_i(t)}{\sqrt{\|B_i(t)\|^2 + \varepsilon^2}},
\end{equation}
where $\varepsilon > 0$ is a small constant.
The control input \eqref{eq:u_i_smooth} remains uniformly bounded by
$u_{\max}$ and converges to the normalized input
\eqref{eq:u_i} as $\varepsilon \to 0$.

To prevent degeneracy, consider the perturbed SMC dynamics
\begin{equation}\label{eq:perturbed_smc_sde_corrected}
d\mathsf{x}_i(t) = u_i(t)\, dt + \sigma\, dW_i(t), \quad i=1,\dots,N,
\end{equation}
where $u_i(t)$ is the Lipschitz continuous SMC input in \eqref{eq:u_i_smooth} and $W_i(t)$ are independent $2$-dimensional standard Wiener processes.
Then, the we have the following result.

\medskip
\begin{theorem}[Almost-Sure Escape from Zero-Normal-Component Manifolds]
Consider $N$ agents evolving according to \eqref{eq:perturbed_smc_sde_corrected} on a bounded domain $\Omega \subset \mathbb{R}^2$ with $\sigma>0$. 
Define the aggregate state and input vectors as
\begin{align*}
X(t) &= [\mathsf{x}_1(t)^\top, \mathsf{x}_2(t)^\top, \dots, \mathsf{x}_N(t)^\top]^\top \in \mathbb{R}^{2N}, \\
U(t) &= [u_1(t)^\top, u_2(t)^\top, \dots, u_N(t)^\top]^\top \in \mathbb{R}^{2N}.
\end{align*}
Let 
$\Omega^N := \underbrace{\Omega \times \Omega \times \cdots \times \Omega}_{N \text{ times}} \subset \mathbb{R}^{2N}$,
and define the set of zero normal-component points for agent $i$ as
\[
\mathcal{M}_i := \{ X \in \Omega^N \mid B_i(X) \cdot \mathbf{n}_i(X) = 0 \},
\]
where $\mathbf{n}_i(X)$ denotes the unit normal vector to the relevant zero-gradient manifold for agent $i$ at state $X$, 
induced by the symmetry of the reference distribution (e.g., axes or diagonals in $\Omega$)

Then, for any initial condition $X(0) = X^0 \in \Omega^N$ and any $t>0$, it holds that
\[
\Pr(X(t) \in \mathcal{M}_i) = 0.
\]
\end{theorem}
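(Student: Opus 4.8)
The plan is to exploit the fact that the aggregate dynamics \eqref{eq:perturbed_smc_sde_corrected} form an It\^o diffusion with \emph{additive, full-rank} noise, so that the law of $X(t)$ is absolutely continuous with respect to Lebesgue measure on $\mathbb{R}^{2N}$, while the target set $\mathcal{M}_i$ is a lower-dimensional (measure-zero) subset. Stacking the agents, the aggregate process obeys $dX(t) = U(t)\,dt + \sigma\,d\mathbf{W}(t)$, where $\mathbf{W}$ is a $2N$-dimensional standard Wiener process and the diffusion matrix is the constant, uniformly elliptic $\sigma I_{2N}$. Because the regularized input \eqref{eq:u_i_smooth} is uniformly bounded, $\|U(t)\| \le \sqrt{N}\,u_{\max}$, and (being Lipschitz) progressively measurable, the SDE admits a unique strong solution on any $[0,t]$.

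First I would establish that $\mathcal{L}(X(t))$ has a density. The cleanest route is Girsanov's theorem: let $\tilde{X}(t) = X^0 + \sigma\mathbf{W}(t)$ denote the driftless comparison process. Since $\|U\|$ is bounded, Novikov's condition $\mathbb{E}\exp(\tfrac{1}{2\sigma^2}\int_0^t\|U(s)\|^2\,ds) < \infty$ holds trivially, so the path law of $X$ on $[0,t]$ is equivalent to that of $\tilde{X}$. In particular, the one-time marginal $\mathcal{L}(X(t))$ is absolutely continuous with respect to $\mathcal{L}(\tilde{X}(t)) = \mathcal{N}(X^0,\sigma^2 t\,I_{2N})$, which possesses a strictly positive Gaussian density $p_t$ on $\mathbb{R}^{2N}$. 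Hence $X(t)$ has a density with respect to Lebesgue measure, and this single-time conclusion is exactly what the fixed-$t$ statement requires.

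Next I would show that $\mathcal{M}_i$ is Lebesgue-null in $\mathbb{R}^{2N}$. The constraint $B_i(X)\cdot \mathbf{n}_i = 0$ is a single scalar equation; for the symmetry sets of Proposition~\ref{prop:invariant_manifold_smc} the normal $\mathbf{n}_i$ is a fixed unit vector, and $B_i$ is a finite linear combination of the gradients of \eqref{eq:f_k}, i.e.\ a real-analytic function of the position coordinates. A real-analytic function that is not identically zero has a zero set of Lebesgue measure zero, so it suffices to note that $g_i(X) := B_i(X)\cdot\mathbf{n}_i$ does not vanish identically (equivalently, $\mathcal{M}_i$ is a proper submanifold of $\Omega^N$ of codimension at least one). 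Combining the two steps gives $\Pr(X(t)\in\mathcal{M}_i) = \int_{\mathcal{M}_i} p_t(x)\,dx = 0$, since the density is integrated over a null set.

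The main obstacle is the measure-zero step, not the density step. The complication is that $B_i$ is not a function of the instantaneous state alone: through the empirical coefficients $c_k(t)$ in \eqref{eq:c_k} it depends on the entire past trajectory, so $\mathcal{M}_i$ is in principle a random, history-dependent set rather than a fixed analytic variety. I would resolve this either by augmenting the state with the running averages $c_k$ to recover a Markovian description and arguing that the enlarged diffusion still charges no lower-dimensional set in the $\mathsf{x}_i$-coordinates, or, more simply, by bounding $\mathcal{M}_i$ from above by the fixed, deterministic union of symmetry manifolds (axes, midlines, diagonals) identified in Proposition~\ref{prop:invariant_manifold_smc}; each such manifold is a codimension-$\ge 1$ affine or linear set intersected with $\Omega^N$, hence Lebesgue-null, and the density argument then applies verbatim. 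Care must also be taken to exclude the degenerate possibility $g_i \equiv 0$, which would make $\mathcal{M}_i$ full-dimensional; this is ruled out because the spectral gradients are linearly independent and the weights $\lambda_k(c_k-\mu_k)$ are not all zero.
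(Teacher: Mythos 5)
Your proposal is correct and shares the paper's overall skeleton---absolute continuity of the law of $X(t)$ with respect to Lebesgue measure, combined with Lebesgue-nullity of $\mathcal{M}_i$---but you execute both key steps by genuinely different means, and in both cases your execution is the more careful one. For the density step, the paper invokes Malliavin-calculus/uniform-ellipticity results for It\^o diffusions with smooth coefficients, whereas you use Girsanov's theorem: since $\|U(t)\|\le\sqrt{N}\,u_{\max}$, Novikov's condition holds trivially, so the law of $X(t)$ is equivalent to the Gaussian $\mathcal{N}(X^0,\sigma^2 t I_{2N})$ and hence absolutely continuous. This route is not only more elementary, it is better matched to the actual structure of the problem: the drift depends on the empirical coefficients $c_k(t)$ in \eqref{eq:c_k}, i.e., on the whole trajectory history, so $X(t)$ is \emph{not} a Markov diffusion with smooth state-dependent drift, and the standard elliptic-smoothness theorems the paper cites do not apply verbatim; Girsanov requires only a bounded, adapted drift. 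The trade-off is that you obtain absolute continuity rather than smoothness of the density, but absolute continuity is all the conclusion needs. For the null-set step, the paper simply asserts that $B_i(X)\cdot\mathbf{n}_i(X)=0$ ``defines a smooth hypersurface of codimension 1,'' which is not automatic (zero sets of smooth functions need not be hypersurfaces), while you supply an actual argument: either real-analyticity of $g_i=B_i\cdot\mathbf{n}_i$ (a real-analytic function not identically zero has a Lebesgue-null zero set), or containment of $\mathcal{M}_i$ in the fixed union of affine symmetry manifolds of Proposition~\ref{prop:invariant_manifold_smc}, each of codimension one and hence null. You also explicitly flag two issues the paper glosses over: the history-dependence of $\mathcal{M}_i$ (making it a random set unless one augments the state or passes to the fixed symmetry manifolds) and the degenerate possibility $g_i\equiv 0$ (e.g., if $c_k(t)=\mu_k$ for all $k$, so that $B_i\equiv 0$), under which the stated conclusion would literally fail. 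Of your two proposed fixes, the containment argument is the more robust, since it is insensitive to both issues; the analyticity route still requires justifying non-degeneracy of $g_i$ (your stated reason is somewhat loose, and with infinitely many retained modes the analyticity of the series itself would need checking), so if you write this up, lead with the containment version.
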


\begin{proof}
Since $\Sigma = \sigma I_{2N}\succ 0$, the SDE
\[
dX(t) = U(t)\,dt + \Sigma\,dW(t)
\]
is a uniformly elliptic Itô diffusion on $\mathbb{R}^{2N}$ with smooth drift and diffusion coefficients.

By standard results in Malliavin calculus \cite{nualart2006malliavin}, for uniformly elliptic SDEs the solution $X(t)$ admits a smooth density $p(\cdot,t)$ with respect to the Lebesgue measure on $\mathbb{R}^{2N}$ for each $t>0$.  
This follows from H\"{o}rmander‑type smoothness conditions or uniform ellipticity results in Malliavin calculus (e.g., Malliavin’s absolute continuity lemma; see \cite{nualart2006malliavin} and references therein).

Each condition $B_i(X)\cdot \mathbf{n}_i(X) = 0$ defines a smooth hypersurface in $\mathbb{R}^{2N}$ of codimension 1. 
Hence, each hypersurface has Lebesgue measure zero. 
Since $\mathcal{M}_i$ is the union of all such hypersurfaces corresponding to agent $i$, it also has Lebesgue measure zero. 
Consequently, $\mathcal{M}_i$ is a Lebesgue null set in $\mathbb{R}^{2N}$.

Since the law of $X(t)$ has a smooth density with respect to Lebesgue measure, the probability of $X(t)$ lying in any Lebesgue null set is zero.  
Specifically,
\[
\Pr(X(t)\in\mathcal{M}_i) = \int_{\mathcal{M}_i}p(\mathsf{x},t)\,d\mathsf{x} = 0,
\]
because the integral of a smooth density over a set of measure zero is zero.  

Thus, for any $t>0$ it holds that $\Pr(X(t)\in\mathcal{M}_i)=0$, which establishes almost-sure escape from any zero-normal-component manifold.
\end{proof}

\begin{remark}[Practical Implementation]
When an agent starts on an invariant set (e.g., at the origin), adding stochastic perturbations may push it outside the bounded domain. In practice, this can be prevented using simple projection or reflection at the domain boundaries.
\end{remark}

\subsection{Mean-Square Boundedness of Perturbed Multi-Agent SMC Dynamics}
While stochastic perturbations are effective for breaking directional degeneracy, they can in principle drive agent trajectories away from nominal paths, raising stability concerns. To address this, we introduce a deterministic contraction term in the SMC dynamics, which ensures that trajectories remain bounded in expectation. This enables a rigorous analysis of \emph{infinite-time mean-square boundedness} (MSB), providing a theoretical guarantee not covered in prior SMC studies.

\medskip
\begin{theorem}[Infinite-Time Mean-Square Boundedness of Perturbed Multi-Agent SMC
with Contraction and Constant Noise]\label{thm:msb}
Consider the perturbed multi-agent SMC dynamics modeled as an It\^o stochastic
process with a contraction term and constant additive noise:
\begin{equation}\label{eq:smc_contraction_const}
d\mathsf{x}_i = u_i(t)\, dt - k \mathsf{x}_i dt + \sigma\, dW_i, \quad i = 1, \dots, N,
\end{equation}
where the control input $u_i(t)\in\mathbb{R}^2$ in \eqref{eq:u_i_smooth} satisfies
\begin{equation}\label{eq:u_bound}
\|u_i(t)\| < u_{\max}, \qquad \forall t \ge 0,
\end{equation}
$k>0$ is the contraction gain, $\sigma>0$ is the constant noise magnitude,
$W_i$ are independent standard Wiener processes in $\mathbb{R}^2$, and
$\mathsf{x}_i(0)\in \Omega$, a bounded domain.

Then, each trajectory $\mathsf{x}_i(t)$ satisfies
\[
\sup_{t \ge 0} \mathbb{E}[\|\mathsf{x}_i(t)\|^2] < \infty,
\]
i.e., the system is mean-square bounded (MSB) for all time.
\end{theorem}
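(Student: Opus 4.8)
The plan is to use the quadratic Lyapunov function $V(\mathsf{x}_i) = \|\mathsf{x}_i\|^2$ and apply It\^o's formula to obtain a differential inequality for $\mathbb{E}[V(\mathsf{x}_i(t))]$ in which the contraction term $-k\mathsf{x}_i$ supplies linear dissipation that dominates the bounded-input and constant-noise contributions. First I would compute the infinitesimal generator of \eqref{eq:smc_contraction_const} acting on $V$. Because the diffusion coefficient is the constant matrix $\sigma I_2$, the It\^o correction is $\tfrac12\operatorname{tr}(\sigma^2 I_2\,\nabla^2 V) = 2\sigma^2$ (using $\nabla^2 V = 2 I_2$), while the drift contributes $2\mathsf{x}_i^\top(u_i - k\mathsf{x}_i)$, so that
\[
\mathcal{L}V = -2k\|\mathsf{x}_i\|^2 + 2\mathsf{x}_i^\top u_i + 2\sigma^2.
\]

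Next I would dominate the indefinite cross term using the input bound \eqref{eq:u_bound} together with Young's inequality: $2\mathsf{x}_i^\top u_i \le 2u_{\max}\|\mathsf{x}_i\| \le k\|\mathsf{x}_i\|^2 + u_{\max}^2/k$. Substituting this absorbs one factor of $k\|\mathsf{x}_i\|^2$ into the dissipative term and yields $\mathcal{L}V \le -k\|\mathsf{x}_i\|^2 + C$ with the constant $C := u_{\max}^2/k + 2\sigma^2$. In integral form, It\^o's formula reads $V(\mathsf{x}_i(t)) = V(\mathsf{x}_i(0)) + \int_0^t \mathcal{L}V\,ds + \int_0^t 2\sigma\,\mathsf{x}_i^\top\,dW_i$.

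The one genuine technical point, and the step I expect to be the main obstacle, is justifying that taking expectations eliminates the stochastic integral so that $\tfrac{d}{dt}\mathbb{E}[V] \le -k\,\mathbb{E}[V] + C$ holds rigorously. The stochastic integral is a priori only a local martingale, so I would introduce the stopping times $\tau_n := \inf\{t \ge 0 : \|\mathsf{x}_i(t)\| \ge n\}$, evaluate the It\^o identity at $t\wedge\tau_n$ (where the integrand is bounded and the stochastic integral is a true zero-mean martingale), take expectations, and then pass to the limit $n\to\infty$ via Fatou's lemma and monotone convergence, using the boundedness of $u_i$ and the constant noise coefficient to guarantee finiteness of the relevant moments. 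This produces the desired differential inequality for $m(t) := \mathbb{E}[\|\mathsf{x}_i(t)\|^2]$.

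Finally, applying the Gr\"onwall / comparison lemma to $m'(t) \le -k\,m(t) + C$ gives $m(t) \le e^{-kt} m(0) + (C/k)(1 - e^{-kt}) \le m(0) + C/k$ for all $t \ge 0$. Since $\mathsf{x}_i(0)\in\Omega$ is bounded, $m(0) = \|\mathsf{x}_i(0)\|^2 < \infty$, and therefore $\sup_{t\ge 0}\mathbb{E}[\|\mathsf{x}_i(t)\|^2] \le \|\mathsf{x}_i(0)\|^2 + C/k < \infty$, which establishes mean-square boundedness uniformly in time and exhibits the explicit asymptotic bound $C/k = u_{\max}^2/k^2 + 2\sigma^2/k$.
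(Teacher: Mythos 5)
Your proposal is correct and takes essentially the same route as the paper's proof: the quadratic Lyapunov function $V = \|\mathsf{x}_i\|^2$, It\^o's formula producing the correction term $2\sigma^2$, Young's inequality with parameter $k$ giving the constant $u_{\max}^2/k + 2\sigma^2$, and a comparison/Gr\"onwall argument yielding the same asymptotic bound $u_{\max}^2/k^2 + 2\sigma^2/k$. The only difference is that you explicitly justify dropping the stochastic integral via localization with stopping times and a limiting argument, a technical point the paper glosses over by simply asserting $\mathbb{E}[\mathsf{x}_i^\top dW_i] = 0$; this makes your write-up slightly more rigorous but does not change the substance of the argument.
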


\begin{proof}
To establish MSB, we define a standard quadratic Lyapunov function for each agent:
\[
V_i(\mathsf{x}_i) = \|\mathsf{x}_i\|^2 = \mathsf{x}_i^\top \mathsf{x}_i,
\]
which is positive definite and radially unbounded. 

Since the input \eqref{eq:u_i_smooth}
is Lipschitz continuous in \(\mathsf{x}_i\) due to the small regularization \(\varepsilon>0\), and the diffusion term is constant, 
the standard conditions for Itô's lemma is applied along with \eqref{eq:smc_contraction_const} as follows:
\begin{align*}
dV_i &= 2 \mathsf{x}_i^\top d\mathsf{x}_i + d\mathsf{x}_i^\top d\mathsf{x}_i \\
&= 2 \mathsf{x}_i^\top \Big(u_i - k \mathsf{x}_i \Big) dt + 2 \mathsf{x}_i^\top \sigma dW_i + d\mathsf{x}_i^\top d\mathsf{x}_i.
\end{align*}
The quadratic variation term is rigorously computed by including all drift and diffusion components:
\[
\begin{aligned}
d\mathsf{x}_i^\top d\mathsf{x}_i
&= (u_i dt - k \mathsf{x}_i dt + \sigma dW_i)^\top (u_i dt - k \mathsf{x}_i dt + \sigma dW_i) \\
&= (u_i dt)^\top (u_i dt) + (-k \mathsf{x}_i dt)^\top (-k \mathsf{x}_i dt)\\ 
&\quad + (\sigma dW_i)^\top (\sigma dW_i)
 + 2 (u_i dt)^\top (-k \mathsf{x}_i dt)\\
&\quad + 2 (u_i dt)^\top (\sigma dW_i)
 + 2 (-k \mathsf{x}_i dt)^\top (\sigma dW_i).
\end{aligned}
\]
Applying standard It\^o calculus rules:
\[
dt^2 = 0, \quad dt \, dW_i = 0, \quad dW_i^\top dW_i = \sum_{j=1}^{2} (dW_i^{(j)})^2 = 2\, dt,
\]
all terms involving products of $dt$ vanish, including the drift contributions. 
Therefore, only the diffusion term survives
\[
d\mathsf{x}_i^\top d\mathsf{x}_i = (\sigma dW_i)^\top (\sigma dW_i) = 2 \, \sigma^2 \, dt,
\]
where the factor $2$ arises from the two-dimensional state of agent $i$.

Substituting back gives
\begin{align*}
dV_i &= 2 \mathsf{x}_i^\top u_i dt -2 k \|\mathsf{x}_i\|^2 dt + 2 \sigma \mathsf{x}_i^\top dW_i + 2\, \sigma^2 dt.
\end{align*}
Taking expectation and noting that $\mathbb{E}[\mathsf{x}_i^\top dW_i] = 0$ for It\^o integrals, we obtain
\begin{align}
\frac{d}{dt} \mathbb{E}[V_i] = 2  \mathbb{E}\left[ \mathsf{x}_i^\top u_i \right] - 2 k \mathbb{E}[\|\mathsf{x}_i\|^2] + 2 \, \sigma^2.\label{eq:ddt_E[V_i]}
\end{align}
Next, we bound the cross term using Young's inequality. For any $\eta>0$,
\[
2 \mathsf{x}_i^\top u_i \le \eta \|\mathsf{x}_i\|^2 + \frac{1}{\eta}\|u_i\|^2.
\]
Choosing $\eta = k$, where $k>0$ is the contraction gain in \eqref{eq:smc_contraction_const} and using the uniform bound $\|u_i(t)\| < u_{\max}$ from
\eqref{eq:u_bound}, we obtain
\[
2 \mathbb{E}[\mathsf{x}_i^\top u_i]
< k \mathbb{E}[\|\mathsf{x}_i\|^2] + \frac{u_{\max}^2}{k}.
\]
Substituting this bound into \eqref{eq:ddt_E[V_i]} yields
\[
\frac{d}{dt} \mathbb{E}[V_i]
< -k \mathbb{E}[\|\mathsf{x}_i\|^2] + \frac{u_{\max}^2}{k} + 2 \, \sigma^2.
\]
Since $\mathbb{E}[\|\mathsf{x}_i\|^2] = \mathbb{E}[V_i]$, this implies the linear
differential inequality
\[
\frac{d}{dt} \mathbb{E}[V_i]
< -k \mathbb{E}[V_i] + \frac{u_{\max}^2}{k} + 2 \, \sigma^2.
\]
By comparison with the scalar ODE
\[
\dot Y = -k Y + \frac{u_{\max}^2}{k} + 2 \, \sigma^2,
\]
it follows that
\[
\sup_{t \ge 0} \mathbb{E}[V_i(t)]
<
\max\!\left\{
\mathbb{E}[V_i(0)],
\frac{u_{\max}^2}{k^2} + \frac{2 \, \sigma^2}{k}
\right\}
< \infty.
\]
Therefore,
\[
\sup_{t \ge 0} \mathbb{E}[\|\mathsf{x}_i(t)\|^2] < \infty,
\]
which establishes infinite-time mean-square boundedness.
\end{proof}

Theorem~\ref{thm:msb} establishes mean-square boundedness for the perturbed SMC dynamics, a result that is novel and not found in the original SMC formulation or its follow-up studies. By setting the stochastic term to zero, the analysis directly applies to the deterministic SMC dynamics, showing that the contraction term alone ensures the agent states remain bounded. The following corollary formalizes this boundedness property in the deterministic case.

\medskip
\begin{corollary}[Deterministic Contraction Guarantees Uniform Boundedness]
Consider the agent dynamics \eqref{eq:smc_contraction_const} with $\sigma = 0$.
Then, each agent state $\mathsf{x}_i(t)$ satisfies
\[
\|\mathsf{x}_i(t)\|
\le
\|\mathsf{x}_i(0)\| e^{-k t}
+ \frac{u_{\max}}{k} \big(1 - e^{-k t}\big),
\quad \forall t \ge 0,
\]
and consequently,
\[
\|\mathsf{x}_i(t)\| \le \max\Big\{ \|\mathsf{x}_i(0)\|, \frac{u_{\max}}{k} \Big\}, \quad \forall t \ge 0.
\]
Hence, the deterministic system is uniformly bounded for all time.
\end{corollary}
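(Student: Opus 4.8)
The plan is to treat \eqref{eq:smc_contraction_const} with $\sigma=0$ as a linear ODE driven by a bounded, time-varying forcing term and to apply variation of constants. Setting $\sigma=0$ reduces the dynamics to $\dot{\mathsf{x}}_i = -k\mathsf{x}_i + u_i(t)$, where $u_i(t)$ is continuous (hence locally integrable) and uniformly bounded by $u_{\max}$ via \eqref{eq:u_bound}. Treating $-k\mathsf{x}_i$ as the linear part, the integrating factor $e^{kt}$ yields the closed-form solution
\[
\mathsf{x}_i(t) = e^{-kt}\mathsf{x}_i(0) + \int_0^t e^{-k(t-s)}\, u_i(s)\, ds.
\]

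Next I would take Euclidean norms and apply the triangle inequality together with the uniform input bound. Pulling $\|u_i(s)\| < u_{\max}$ out of the integral and evaluating $\int_0^t e^{-k(t-s)}\, ds = (1-e^{-kt})/k$ gives directly
\[
\|\mathsf{x}_i(t)\| \le \|\mathsf{x}_i(0)\|\, e^{-kt} + \frac{u_{\max}}{k}\big(1-e^{-kt}\big),
\]
which is the first claimed inequality.

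For the second inequality, I would observe that the right-hand side is a convex combination: writing $\alpha = e^{-kt}\in[0,1]$, the bound equals $\alpha\,\|\mathsf{x}_i(0)\| + (1-\alpha)\,\tfrac{u_{\max}}{k}$, which never exceeds $\max\{\|\mathsf{x}_i(0)\|,\, u_{\max}/k\}$. Since this holds for every $t\ge 0$, uniform boundedness follows at once.

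I expect no serious obstacle here; the proof is essentially routine, and the only real decision is which route to take. An alternative is to bound $\tfrac{d}{dt}\|\mathsf{x}_i\|$ directly through a Grönwall/comparison argument, using Cauchy--Schwarz $\mathsf{x}_i^\top u_i \le \|\mathsf{x}_i\|\,\|u_i\|$ to obtain $\tfrac{d}{dt}\|\mathsf{x}_i\| \le u_{\max} - k\|\mathsf{x}_i\|$ and comparing against the scalar ODE $\dot z = u_{\max} - kz$. This is equivalent but introduces the nuisance of non-differentiability of the Euclidean norm at $\mathsf{x}_i=0$, which must be handled with one-sided (Dini) derivatives or by working with $\|\mathsf{x}_i\|^2$ and resolving the resulting $\sqrt{\,\cdot\,}$ nonlinearity. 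The variation-of-constants route sidesteps this entirely and delivers both inequalities with minimal bookkeeping, so it is the cleaner choice.
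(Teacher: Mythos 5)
Your proof is correct, and it takes a genuinely different route from the paper. The paper uses the comparison-principle argument you mention as your alternative: it differentiates the norm directly, writing $\tfrac{d}{dt}\|\mathsf{x}_i\| = \mathsf{x}_i^\top \dot{\mathsf{x}}_i/\|\mathsf{x}_i\|$, bounds this by $u_{\max} - k\|\mathsf{x}_i\|$ via Cauchy--Schwarz, and compares against the scalar ODE $\dot{\mathsf{y}} = u_{\max} - k\mathsf{y}$; for the second inequality it then argues that the sum of a monotone decreasing and a monotone increasing term attains its maximum at $t=0$ or $t\to\infty$. Your variation-of-constants route buys two things. First, as you correctly anticipate, it sidesteps the non-differentiability of $\|\mathsf{x}_i\|$ at the origin, which the paper's proof glosses over (the displayed derivative is undefined when $\mathsf{x}_i = 0$, and a fully rigorous version of that route needs Dini derivatives or a comparison on $\|\mathsf{x}_i\|^2$). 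Second, your convex-combination argument $\alpha\|\mathsf{x}_i(0)\| + (1-\alpha)u_{\max}/k \le \max\{\|\mathsf{x}_i(0)\|, u_{\max}/k\}$ with $\alpha = e^{-kt}$ is airtight, whereas the paper's stated principle is false in general (e.g., $e^{-t} + 1 - e^{-2t}$ has an interior maximum exceeding both endpoint values); the paper's conclusion survives only because both exponentials share the same rate $k$, which makes its bound monotone in $t$ --- exactly the structure your convex-combination observation captures. The one thing the paper's approach buys in exchange is generality: the comparison argument would still work if the contraction term $-k\mathsf{x}_i$ were replaced by a nonlinear dissipative term admitting no closed-form solution, whereas variation of constants exploits the linearity of the drift. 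For the statement as given, your route is the cleaner and more rigorous of the two.
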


\begin{proof}
With $\sigma = 0$, the dynamics reduce to
\[
\dot{\mathsf{x}}_i = u_i(t) - k \mathsf{x}_i, \quad \|u_i(t)\| < u_{\max}.
\]
Taking norms and applying Cauchy--Schwarz gives
\[
\frac{d}{dt}\|\mathsf{x}_i\| = \frac{\mathsf{x}_i^\top \dot{\mathsf{x}}_i}{\|\mathsf{x}_i\|} \le \|u_i(t)\| - k \|\mathsf{x}_i\| < u_{\max} - k \|\mathsf{x}_i\|.
\]
Using the standard comparison principle, the scalar ODE
\[
\dot{\mathsf{y}} = u_{\max} - k \mathsf{y}, \quad \mathsf{y}(0) = \|\mathsf{x}_i(0)\|
\]
provides an upper bound for $\|\mathsf{x}_i(t)\|$. Solving this ODE yields
\[
\|\mathsf{x}_i(t)\| \le \|\mathsf{x}_i(0)\| e^{-k t} + \frac{u_{\max}}{k} \big(1 - e^{-k t}\big),
\quad \forall t \ge 0.
\]
Since the sum of a monotone decreasing term and a monotone increasing term achieves its maximum at either $t=0$ or $t \to \infty$, we further have
\[
\|\mathsf{x}_i(t)\| \le \max\Big\{ \|\mathsf{x}_i(0)\|, \frac{u_{\max}}{k} \Big\}, \quad \forall t \ge 0.
\]
Thus, the deterministic system is uniformly bounded for all time.
\end{proof}

\section{Simulation Results}

We revisit the quadrimodal reference scenario of Fig.~\ref{fig:axis_diagonal_examples} under the perturbed SMC framework. The simulations use identical initial agent positions and domain settings as the previous case, with parameters $\varepsilon=1\times 10^{-3}$,
$\sigma = 1\times 10^{-5}$, and $k = 1\times 10^{-3}$.

The results are shown in Fig.~\ref{fig:corrected}. Two key observations are evident. First, the undesired behaviors observed in classical SMC, such as stalling or motion constrained along symmetry axes, are completely eliminated by introducing stochastic perturbations, which allow agents initially trapped on invariant sets to escape, as guaranteed by Theorem~1. Second, despite the presence of these perturbations, the agents do not diverge outside the domain. This is consistent with Theorem~2, which ensures mean-square boundedness of the trajectories, guaranteeing that all agent motions remain bounded.

Overall, the simulations validate that the proposed stochastic SMC maintains spectral coverage objectives while effectively mitigating gradient cancellation and transient degeneracy.

\begin{figure}[!t]
\centering
\subfloat[]{
\includegraphics[width=0.5\linewidth]{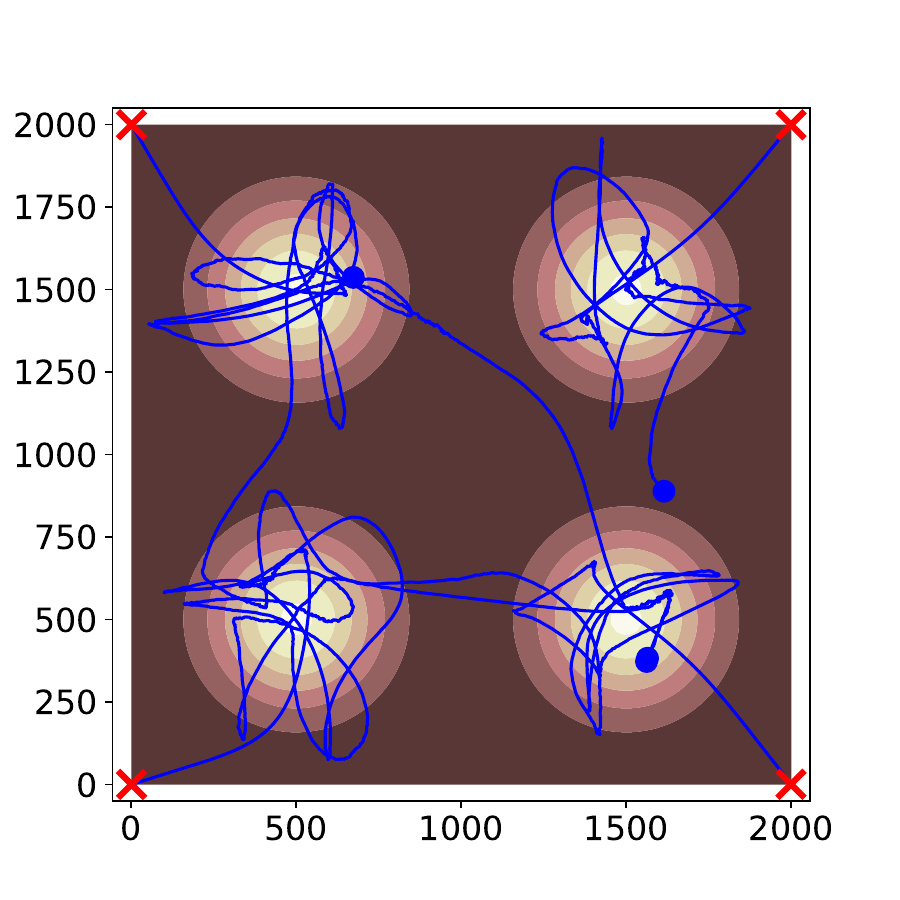}}
\subfloat[]{
\includegraphics[width=0.5\linewidth]{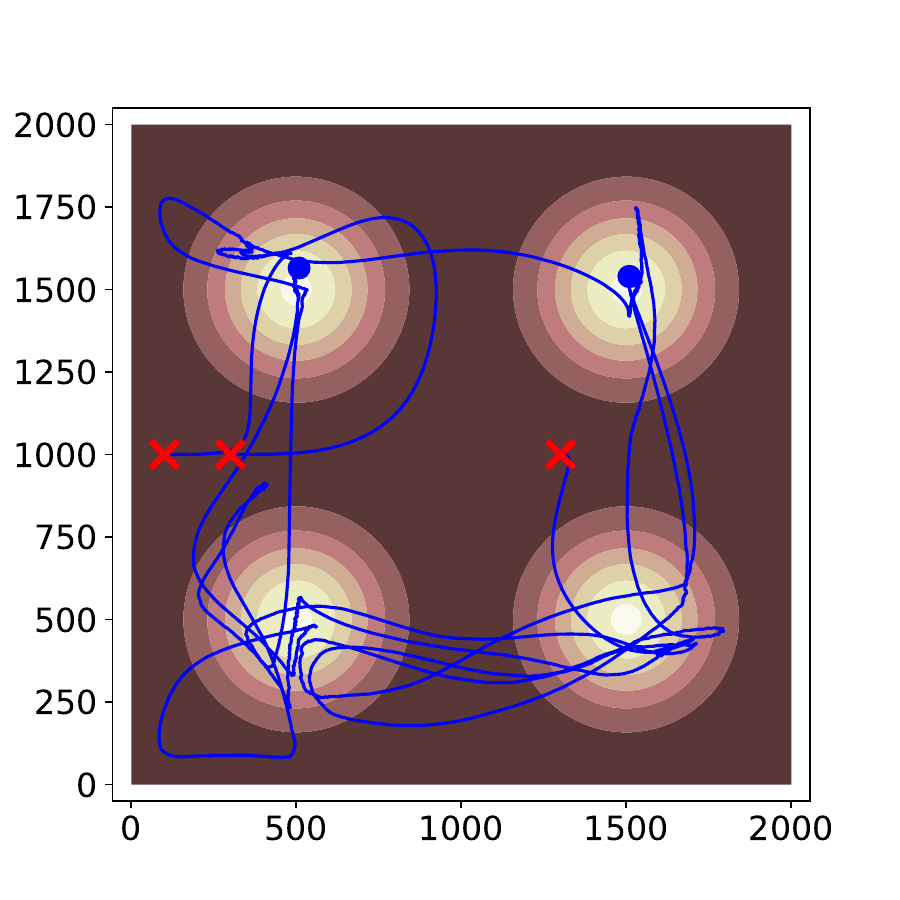}}
\caption{Agent trajectories under stochastic spectral control. The undesired early-stage directional bias is eliminated due to perturbation, and all agents remain bounded within the domain.}
\label{fig:corrected}
\end{figure}

\section{Conclusion and Future Work}

Classical SMC can exhibit directional degeneracy due to symmetry-induced invariant manifolds, leading to axis- or diagonal-constrained motion and transient stalling. This paper introduced stochastic perturbations with a contraction term to mitigate such behaviors, guaranteeing that agents almost surely escape zero-gradient manifolds, with mean-square boundedness of their trajectories preserved.

Future work includes extensions to heterogeneous agents, higher-dimensional domains, adaptive perturbation strategies, and experimental validation on multi-robot platforms.


\bibliographystyle{ieeetr} 
\bibliography{ref}

\end{document}